\documentclass[journal]{IEEEtran}

\usepackage[T1]{fontenc}
\usepackage[latin9]{inputenc}
\usepackage{float}
\usepackage{amsmath}
\usepackage{amsthm}
\usepackage{amssymb}
\usepackage{graphicx}
\usepackage{stfloats}
\usepackage{cite} 
\usepackage{xcolor}
\usepackage{bbm}
\usepackage{multirow}
\usepackage{multicol}
\usepackage{caption}
\usepackage{subfig}
\usepackage{makecell}

\makeatletter

\floatstyle{ruled}
\newfloat{algorithm}{tbp}{loa}
\providecommand{\algorithmname}{Algorithm}
\floatname{algorithm}{\protect\algorithmname}

\theoremstyle{plain}
\newtheorem{thm}{\protect\theoremname}
\theoremstyle{plain}
\newtheorem{prop}[thm]{\protect\propositionname}

\IEEEoverridecommandlockouts
\usepackage{cite}
\usepackage{amsfonts}\usepackage{algorithmic}
\usepackage{textcomp}
\usepackage{etoolbox}

\renewcommand{\Re}{\operatorname{Re}}

\def\BibTeX{{\rm B\kern-.05em{\sc i\kern-.025em b}\kern-.08em
    T\kern-.1667em\lower.7ex\hbox{E}\kern-.125emX}}

\patchcmd{\@maketitle}
  {\addvspace{0.5\baselineskip}\egroup}
  {\addvspace{-1\baselineskip}\egroup}
  {}
  {}

\makeatother

\providecommand{\propositionname}{Proposition}
\providecommand{\theoremname}{Theorem}

\begin{document}

\title{Efficient Design of Fronthaul-Constrained Uplink Reception for Cell-Free XL-MIMO}

\author{Dogon Kim, \textit{Graduate Student Member}, \textit{IEEE}, Hyunmin Noh, \textit{Member}, \textit{IEEE}, \\and Seok-Hwan Park, \textit{Senior Member}, \textit{IEEE} \thanks{

This work was supported in part by the Regional Innovation System and Education (RISE) Program through Jeonbuk RISE Center, funded by the Ministry of Education (MOE) and Jeonbuk State, Republic of Korea, under Grant 2025-RISE-13-JBU; in part by the National Research Foundation (NRF) funded by the Ministry of Science and ICT (MSIT) under Grant RS-2026-25468472; and in part by the Institute of Information $\&$ Communications Technology Planning $\&$ Evaluation(IITP) funded by the MSIT under Grant IITP-2026-RS-2024-00439292. \textit{(Corresponding authors: Hyunmin Noh; Seok-Hwan Park.)}

D. Kim and H. Noh are with the Division of Electronic Engineering, Jeonbuk
National University, Jeonju, Korea (email: kdg8728@jbnu.ac.kr, hmnoh@jbnu.ac.kr). S.-H. Park is with the School of Electrical Engineering, Hanyang University, Ansan, South Korea (seokhwanpark@hanyang.ac.kr).}
}
\maketitle
\begin{abstract}
With the evolution of multiple-input multiple-output (MIMO) technology toward extremely large (XL) MIMO systems comprising hundreds of, or more, antennas, this work investigates scalable and fronthaul-efficient reception design for the uplink of cell-free (CF) XL-MIMO systems. In such systems, the uplink signals transmitted by mobile user equipments (UEs) are jointly decoded at a central processing unit (CPU) connected to distributed access points (APs) via finite-capacity fronthaul links. We address the joint optimization of linear transform matrices, used by the APs to reduce the signal dimension and fronthaul load, and fronthaul compression strategies to maximize the uplink sum-rate. A fractional programming (FP)-based iterative algorithm is first developed, followed by a reduced-complexity variant, termed accelerated FP (A-FP), along with its decentralized implementation whose fronthaul overhead remains independent of the number of AP antennas.
Numerical results show that the proposed A-FP scheme significantly reduces computational complexity compared to FP implemented with general-purpose solvers, while substantially outperforming scalable baseline schemes that rely solely on local channel state information.
\end{abstract}

\begin{IEEEkeywords}
Extremely-large cell-free massive MIMO, fractional programming, reduced-complexity, fronthaul-efficient.
\end{IEEEkeywords}

\theoremstyle{theorem}
\newtheorem{theorem}{Theorem} 
\theoremstyle{proposition}
\newtheorem{proposition}{Proposition} 
\theoremstyle{lemma}
\newtheorem{lemma}{Lemma} 
\theoremstyle{corollary}
\newtheorem{corollary}{Corollary} 
\theoremstyle{definition}
\newtheorem{definition}{Definition}
\theoremstyle{remark}
\newtheorem{remark}{Remark}

\section{Introduction}

To enable immersive and massive communications in the 6G era, multiple-input multiple-output (MIMO) techniques are expected to evolve by increasing the number of antennas by at least an order of magnitude, resulting in systems with several hundreds or even thousands of antennas \cite{Xu:JSTSP25}.
This trend necessitates the development of scalable and efficient signal processing design for extremely large (XL) MIMO systems.
In this work, we investigate this challenge in the context of the uplink of cell-free (CF) XL-MIMO systems, where the design problem becomes particularly challenging for two key reasons. First, as a central processing unit (CPU) coordinates distributed access points (APs) via fronthaul links, their signal processing strategies need to be jointly optimized, leading to a high-dimensional and computationally intensive optimization problem.
Second, since these strategies typically affect the system objective, e.g., sum-rate, in a coupled manner, global channel state information (CSI) is required for joint optimization. Hence, it is crucial to develop a fronthaul-efficient decentralized optimization algorithm, that minimizes the fronthaul overhead exchanged between the CPU and APs during the optimization process.

Reference \cite{Zhou:TSP16} studied the optimization of fronthaul compression for the uplink of CF XL-MIMO systems, but the developed algorithm relies on an iterative process, which does not scale well with the number of AP antennas. 
To address this limitation, References \cite{Liu:Cambridge17, Wiffen:WCNC20} proposed applying a linear transformation before compression to reduce both the signal dimension and fronthaul overhead.
However, for practical reasons, these transformation matrices were designed based solely on local CSI, resulting in significantly suboptimal performance.
Scalable algorithms for optimizing linear transformation matrices were later explored in \cite{Xu:TSP24, Zhuang:JSTSP25}, but they considered compression only as a means of dimensionality reduction without explicitly accounting for the effects of finite-capacity fronthaul links.

Motivated by the above discussion, we address the joint optimization of linear transformation matrices and fronthaul compression strategies for the uplink of CF XL-MIMO systems. 
Specifically, we aim to develop a fronthaul-efficient and scalable optimization algorithm.
To this end, we first derive an iterative algorithm based on fractional programming (FP) \cite{Shen:TN19} and Fenchel's inequality \cite{Zhou:TSP16} to solve the sum-rate maximization problem.
Since the resulting FP algorithm\footnote{The proposed FP-based algorithm can be seen as an instance of the successive convex approximation (SCA) framework, in the sense that it iteratively solves a sequence of convex problems obtained by locally approximating non-convex functions with tight surrogate bounds, thereby achieving monotonically non-decreasing objective values over the iterations.} requires both high computational complexity and global CSI, inspired by the low-complexity FP approach proposed in \cite{Kim:WCL25} for integrated sensing and communication (ISAC) beamforming design, we develop a reduced-complexity variant referred to as accelerated FP (A-FP) scheme. In the proposed A-FP scheme, the overall transformation and fronthaul qauntization noise covariance matrices are partitioned into per-AP block variables, which are sequentially updated in closed form.
We then analyze the computational complexity of both the FP and A-FP schemes and present a decentralized implementation of the proposed A-FP scheme, whose fronthaul overhead in both uplink and downlink directions remains independent of the number of AP antennas.
Through numerical results, we demonstrate that the A-FP scheme achieves nearly identical performance to FP implemented with general-purpose solvers while dramatically reducing computational complexity, and significantly outperforms baseline scalable schemes based solely on local CSI.

The main contributions of this work are summarized as follows:
\begin{itemize}
    \item We formulate a sum-rate maximization problem that jointly optimizes the linear transformation and fronthaul compression strategies for uplink CF XL-MIMO systems under finite-capacity fronthaul constraints, and develop an FP-based algorithm that finds a stationary point but incurs high computational complexity and global CSI requirements.
    \item We propose a reduced-complexity variant, termed A-FP, which enables per-AP sequential updates in closed form and supports decentralized implementation with fronthaul overhead independent of the number of AP antennas.
    \item Numerical results show that A-FP achieves near-optimal performance compared to FP implemented with general-purpose solvers while significantly reducing computational complexity and outperforming scalable baselines relying solely on local CSI.
\end{itemize}

\section{System Model and Problem Definition} \label{sec:System-Model}

We consider the uplink of a CF XL-MIMO system, where $K$ single-antenna user equipments (UEs) send independent messages to a CPU through $L$ APs, each equipped with $M$ antennas.
Our focus is on the XL-MIMO regime, 
characterized by each AP having a very large number of antennas, i.e., $M \gg 1$.
We assume that the fronthaul link connecting each AP $i$ to the CPU has a limited capacity of $C_F$ bps/Hz. The synchronization among APs and between the APs and UEs is assumed to be perfect.
We define index sets $\mathcal{K} = \{1,2,\ldots,K\}$ and $\mathcal{L} = \{1,2,\ldots,L\}$.

\subsection{Channel Model and Signal Processing Models} \label{sub:Uplink-Channel-Model}

The received signal vector $\mathbf{y}_i\in\mathbb{C}^{M}$ at AP $i$ is given by
\begin{align}
    \mathbf{y}_i = \mathbf{H}_i \mathbf{x} + \mathbf{z}_i, \label{UL-channel-model}
\end{align}
where $\mathbf{x} = [x_1 \, \cdots \, x_K]^T\in\mathbb{C}^K$ denotes the transmitted signal vector with each $x_k\sim\mathcal{CN}(0,p_k)$ representing the transmitted signal by UE $k$, $\mathbf{H}_{i} = [\mathbf{h}_{i,1} \, \cdots \, \mathbf{h}_{i,K}]\in\mathbb{C}^{M\times K}$ represents the channel matrix from the UEs to AP $i$ with $\mathbf{h}_{i,k}\in\mathbb{C}^M$ denoting the channel vector from UE $k$, and $\mathbf{z}_i\sim\mathcal{CN}(\mathbf{0}, \sigma_z^2\mathbf{I})$ is the additive noise vector.
Throughout the paper, we assume that each AP $i$ has knowledge of its \textit{local CSI} $\{\mathbf{h}_{i,k}\}_{k\in\mathcal{K}}$ which collects the channel vectors between itself and all UEs.




To convey the high-dimensional received signal $\mathbf{y}_i\in\mathbb{C}^{M\times 1}$ ($M\gg 1$) to the CPU over a finite-capacity fronthaul link, AP $i$ performs two sequential operations: \textit{linear transformation} and \textit{fronthaul compression}, as described below.

AP $i$ first applies a linear transformation to reduce the dimensionality of $\mathbf{y}_i$, generating an $N$-dimensional signal $\mathbf{v}_i\in\mathbb{C}^N$ with $N\ll M$, as
\begin{align}
    \mathbf{v}_i = \mathbf{W}_i \mathbf{y}_i, \label{eq:linear-transform}
\end{align}
where $\mathbf{W}_i\in\mathbb{C}^{N\times M}$ denotes the linear transformation matrix.

The transformed signal $\mathbf{v}_i\in\mathbb{C}^N$ is then quantized and compressed for transmission over the finite-capacity fronthaul link.
Following the Gaussian test channel model in \cite{Zhou:TSP16}, the quantized signal $\hat{\mathbf{v}}_i$ recovered at the CPU can be expressed as
\begin{align}
    \hat{\mathbf{v}}_i = \mathbf{v}_i + \mathbf{e}_i, \label{eq:fronthaul-compression}
\end{align}
with the quantization noise $\mathbf{e}_i$ being uncorrelated with $\mathbf{v}_i$ and distributed as $\mathbf{e}\sim\mathcal{CN}(\mathbf{0}, \boldsymbol{\Omega}_i)$, where $\boldsymbol{\Omega}_i\succeq \mathbf{0}$ denotes the quantization noise covariance matrix.

To ensure reliable recovery of $\hat{\mathbf{v}}_i$ at the CPU, the following condition should be met \cite{Zhou:TSP16}:
\begin{align}
    \log_2\det\left( 
\mathbf{W}_i \mathbf{S}_i \mathbf{W}_i^H + \boldsymbol{\Omega}_i \right) - \log_2\det\left(\boldsymbol{\Omega}_i\right) \leq C_F, \label{eq:fronthaul-capacity-constraint}
\end{align}
where $\mathbf{S}_i = \mathbf{H}_i \bar{\mathbf{P}} \mathbf{H}_i^H + \sigma_z^2 \mathbf{I}$ with $\bar{\mathbf{P}} = \text{diag}(\{p_k\}_{k\in\mathcal{K}})$.

\subsection{Achievable Rates and Problem Definition}

The CPU decodes each UE's signal $x_k$ from the stacked quantized signal vector $\hat{\mathbf{v}} = [\hat{\mathbf{v}}_1^H \, \cdots \, \hat{\mathbf{v}}_L^H]^H \in \mathbb{C}^{LN}$ given as
\begin{align}
    \hat{\mathbf{v}} = \bar{\mathbf{W}} \sum\nolimits_{k\in\mathcal{K}} \mathbf{h}_k x_k + \bar{\mathbf{W}}\bar{\mathbf{z}} + \bar{\mathbf{e}}. \label{eq:stacked-quantized-signals}
\end{align}
Here we have defined $\mathbf{h}_k = [\mathbf{h}_{1,k}^H \, \cdots \, \mathbf{h}_{L,k}^H]^H \in \mathbb{C}^{L M}$, $\bar{\mathbf{z}} = [\mathbf{z}_1^H \, \cdots \, \mathbf{z}_L^H]^H \in \mathbb{C}^{LM}$, and $\bar{\mathbf{e}} = [\mathbf{e}_1^H \, \cdots \, \mathbf{e}_L^H]^H \in \mathbb{C}^{L N}$.
Accordingly, the data rate $R_k$ for UE $k$ is given by
\begin{align}
    R_k = \log_2\left( 1 + p_k \mathbf{h}_k^H \bar{\mathbf{W}}^H \mathbf{C}_{\text{IF},k}^{-1} \bar{\mathbf{W}} \mathbf{h}_k \right), \label{eq:data-rate}
\end{align}
where $\mathbf{C}_{\text{IF},k} = \sum\nolimits_{k^{\prime}\in\mathcal{K}\setminus\{k\}} p_{k^{\prime}} \bar{\mathbf{W}} \mathbf{h}_{k^{\prime}} \mathbf{h}_{k^{\prime}}^H \bar{\mathbf{W}}^H + \sigma_z^2 \bar{\mathbf{W}}\bar{\mathbf{W}}^H + \bar{\boldsymbol{\Omega}}$ denotes the covariance matrix of the interference-plus-noise term with $\bar{\boldsymbol{\Omega}} = \text{blkdiag}(\{ \boldsymbol{\Omega}_i\}_{i\in\mathcal{L}})$.

Throughout the paper, our objective is to maximize the sum-rate of the UEs $\sum_{k\in\mathcal{K}} R_k$ by jointly optimizing the linear transformation matrices $\mathbf{W} = \{\mathbf{W}_i\}_{i\in\mathcal{L}}$ and the quantization noise covariance matrices $\boldsymbol{\Omega} = \{\boldsymbol{\Omega}_i\}_{i\in\mathcal{L}}$.
Specifically, we aim to solve the following optimization problem:
\begin{subequations} \label{eq:problem-original}
\begin{align}
    \!\!\!\!\underset{ \mathbf{W}, \boldsymbol{\Omega} } {\mathrm{max.}}\,\,\, & \sum\nolimits_{k\in\mathcal{K}} \log_2\left( 1 + p_k \mathbf{h}_k^H \bar{\mathbf{W}}^H \mathbf{C}_{\text{IF},k}^{-1} \bar{\mathbf{W}} \mathbf{h}_k \right) \, \label{eq:problem-original-objective} \\
 \mathrm{s.t. }\quad & \text{(\ref{eq:fronthaul-capacity-constraint}) and } \boldsymbol{\Omega}_i \succeq \mathbf{0}, \,\, \forall i\in\mathcal{L}. \label{eq:problem-original-psd}
\end{align}
\end{subequations}
Solving problem (\ref{eq:problem-original}) presents two main challenges. It first requires global CSI, which is to difficult to obtain in practice. Moreover, even with full knowledge of global CSI, the problem remains difficult due to its inherent non-convexity.

\section{FP-Based Optimization Algorithms} \label{sec:FP-based-algorithms}

In this section, we develop FP-based alternating optimization (AO) algorithms to solve problem (\ref{eq:problem-original}).
We first present a direct application of FP approach to (\ref{eq:problem-original}) and observe that it is challenging to implement the developed algorithm due to high computational complexity and large fronthaul overhead for exchanging CSI.
To overcome these limitations, we propose an accelerated version of the FP-based scheme and discuss its fronthaul-efficient decentralized implementation.

\subsection{FP-Based Optimization} \label{sub:FP-based}

Since directly solving the non-convex problem (\ref{eq:problem-original}) is challenging due to the non-convexity, we first reformulate it into an equivalent problem, as presented in the following proposition.

\begin{prop}[Equivalent Reformulation]
Problem (\ref{eq:problem-original}) can be equivalently reformulated as the following problem:
\begin{subequations} \label{eq:problem-FP}
\begin{align}
    \!\!\!\!\underset{ ^{\mathbf{W}, \boldsymbol{\Omega},}_{ \boldsymbol{\gamma}, \boldsymbol{\theta}, \boldsymbol{\Sigma}} } {\mathrm{max.}}\,\,\, & \sum\nolimits_{k\in\mathcal{K}} 
    \left( \log_2\left(1 + \gamma_k\right) - \frac{\gamma_k}{\ln 2} \right) \label{eq:problem-FP-objective} \\
    & + \sum\nolimits_{k\in\mathcal{K}}\frac{1+\gamma_k}{\ln 2} \Big( 2 \Re\big\{ (\bar{\mathbf{W}}\mathbf{h}_k)^H p_k^{1/2}\boldsymbol{\theta}_k \big\} \nonumber \\
    & - \sum\nolimits_{k^{\prime}\in\mathcal{K}} p_{k^{\prime}} \big|\boldsymbol{\theta}_k^H \bar{\mathbf{W}} \mathbf{h}_{k^{\prime}}\big|^2 - \sigma_z^2 \|\bar{\mathbf{W}}^H\boldsymbol{\theta}_k\|^2 - \boldsymbol{\theta}_k^H \bar{\boldsymbol{\Omega}} \boldsymbol{\theta}_k\Big) \nonumber \\
 \mathrm{s.t. }\,\,\, & \ln\det\left(\boldsymbol{\Sigma}_i\right) + \mathrm{tr}\left( \boldsymbol{\Sigma}_i^{-1} \left( 
\mathbf{W}_i \mathbf{S}_i \mathbf{W}_i^H + \boldsymbol{\Omega}_i \right) \right) - N 
\nonumber \\
& \qquad - \ln\det\left(\boldsymbol{\Omega}_i\right) \leq C_F \ln 2, \, \forall i\in\mathcal{L}, \label{eq:problem-FP-FH} \\
& \boldsymbol{\Omega}_i \succeq \mathbf{0}, \boldsymbol{\Sigma}_i\succ \mathbf{0}, \,\, \forall i\in\mathcal{L}, \,\, \gamma_k \geq 0, \,\, \forall k\in\mathcal{K}, \label{eq:problem-FP-psd}
\end{align}
\end{subequations}
where $\boldsymbol{\gamma} = \{\gamma_k\}_{k\in\mathcal{K}}$, $\boldsymbol{\theta} = \{\boldsymbol{\theta}_k\}_{k\in\mathcal{K}}$, and $\boldsymbol{\Sigma} = \{\boldsymbol{\Sigma}_i\}_{i\in\mathcal{L}}$.

For fixed $\mathbf{W}$ and $\boldsymbol{\Omega}$, the optimal $\{\boldsymbol{\gamma}, \boldsymbol{\theta}\}$, that maximize the objective function in (\ref{eq:problem-FP-objective}), and the optimal $\boldsymbol{\Sigma}_i$, that minimizes the left-hand side of (\ref{eq:problem-FP-FH}), are given by
\begin{subequations} \label{eq:optimal-gamma-theta-Sigma}
\begin{align}
    &\gamma_k = p_k \left(\bar{\mathbf{W}}\mathbf{h}_k\right)^H \mathbf{C}_{\mathrm{IF},k}^{-1} \bar{\mathbf{W}}\mathbf{h}_k, \, k\in\mathcal{K},\label{eq:optimal-gamma} \\
    &\boldsymbol{\theta}_k = p_k^{1/2}\!\left( 
p_k\bar{\mathbf{W}}\mathbf{h}_k \mathbf{h}_k^H\bar{\mathbf{W}}^H \!+\! \mathbf{C}_{\mathrm{IF},k}\right)^{-1} \!\bar{\mathbf{W}}\mathbf{h}_k, \, k\in\mathcal{K},\label{eq:optimal-theta} \\
    &\boldsymbol{\Sigma}_i = \mathbf{W}_i \mathbf{S}_i \mathbf{W}_i^H + \boldsymbol{\Omega}_i, \, i\in\mathcal{L}. \label{eq:optimal-Sigma}
\end{align}
\end{subequations}

\end{prop}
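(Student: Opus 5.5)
The plan is to treat the objective and the fronthaul constraint separately, showing for each that the auxiliary variables give a tight bound that is attained by the expressions in (\ref{eq:optimal-gamma-theta-Sigma}). Equivalence then follows: for any fixed $(\mathbf{W},\boldsymbol{\Omega})$, maximizing over $(\boldsymbol{\gamma},\boldsymbol{\theta})$ recovers the original objective, and the constraint (\ref{eq:problem-FP-FH}) is feasible for some $\boldsymbol{\Sigma}_i\succ\mathbf{0}$ if and only if (\ref{eq:fronthaul-capacity-constraint}) holds.

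\textbf{Objective.} I would use the Lagrangian dual transform and then the quadratic transform of \cite{Shen:TN19}. Write $\mathrm{SINR}_k = p_k\mathbf{a}_k^H\mathbf{C}_{\mathrm{IF},k}^{-1}\mathbf{a}_k$ with $\mathbf{a}_k=\bar{\mathbf{W}}\mathbf{h}_k$.

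First, for fixed $\mathrm{SINR}_k$ the function $\log(1+\gamma_k)-\gamma_k+(1+\gamma_k)\frac{\mathrm{SINR}_k}{1+\mathrm{SINR}_k}$ is concave in $\gamma_k$. Setting its derivative to zero gives $\gamma_k=\mathrm{SINR}_k$, i.e., (\ref{eq:optimal-gamma}), and at that point the function equals $\log(1+\mathrm{SINR}_k)$.

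Second, I would rewrite $\frac{\mathrm{SINR}_k}{1+\mathrm{SINR}_k}=p_k\mathbf{a}_k^H\mathbf{C}_k^{-1}\mathbf{a}_k$ with $\mathbf{C}_k=p_k\mathbf{a}_k\mathbf{a}_k^H+\mathbf{C}_{\mathrm{IF},k}$, using the matrix inversion lemma. Then I would apply the identity
\[
\mathbf{b}^H\mathbf{C}^{-1}\mathbf{b}=\max_{\boldsymbol{\theta}}\,2\Re\{\mathbf{b}^H\boldsymbol{\theta}\}-\boldsymbol{\theta}^H\mathbf{C}\boldsymbol{\theta}
\]
with $\mathbf{b}=p_k^{1/2}\mathbf{a}_k$. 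The maximizer is $\boldsymbol{\theta}_k=\mathbf{C}_k^{-1}\mathbf{b}$, which is (\ref{eq:optimal-theta}). Expanding $\boldsymbol{\theta}_k^H\mathbf{C}_k\boldsymbol{\theta}_k$ gives exactly the sum over all $k'$ (including $k$), the $\sigma_z^2\|\bar{\mathbf{W}}^H\boldsymbol{\theta}_k\|^2$ term and the $\bar{\boldsymbol{\Omega}}$ term in (\ref{eq:problem-FP-objective}).

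Since $1+\gamma_k\ge 0$, substituting the inner maximization does not change the optimal value. Dividing by $\ln 2$ to convert to $\log_2$ then gives (\ref{eq:problem-FP-objective}).

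\textbf{Constraint.} I would use Fenchel's inequality for the log-determinant, as in \cite{Zhou:TSP16}. For $\mathbf{A}\succ\mathbf{0}$,
\[
\ln\det\mathbf{A}\le\ln\det\boldsymbol{\Sigma}+\mathrm{tr}(\boldsymbol{\Sigma}^{-1}\mathbf{A})-N\quad\text{for all }\boldsymbol{\Sigma}\succ\mathbf{0},
\]
with equality iff $\boldsymbol{\Sigma}=\mathbf{A}$. This follows from concavity of $\ln\det$ via the first-order bound at $\boldsymbol{\Sigma}$, or equivalently because $\mathrm{tr}(\mathbf{X})-\ln\det\mathbf{X}-N\ge0$ for $\mathbf{X}=\boldsymbol{\Sigma}^{-1/2}\mathbf{A}\boldsymbol{\Sigma}^{-1/2}$. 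Taking $\mathbf{A}=\mathbf{W}_i\mathbf{S}_i\mathbf{W}_i^H+\boldsymbol{\Omega}_i$ shows that the minimum over $\boldsymbol{\Sigma}_i$ of the left-hand side of (\ref{eq:problem-FP-FH}) equals the left-hand side of (\ref{eq:fronthaul-capacity-constraint}) times $\ln 2$, attained at (\ref{eq:optimal-Sigma}). So feasibility for some $\boldsymbol{\Sigma}_i$ coincides with (\ref{eq:fronthaul-capacity-constraint}).

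\textbf{Combining.} Any feasible point of (\ref{eq:problem-FP}) projects to a feasible $(\mathbf{W},\boldsymbol{\Omega})$ of (\ref{eq:problem-original}) whose objective is at least as large. Conversely, any feasible $(\mathbf{W},\boldsymbol{\Omega})$ lifts via (\ref{eq:optimal-gamma-theta-Sigma}) to a feasible point with the same value. Hence the two problems have equal optimal values and corresponding optimizers.

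The main obstacle is a regularity issue: (\ref{eq:problem-original-psd}) allows singular $\boldsymbol{\Omega}_i$, for which $\ln\det\boldsymbol{\Omega}_i$ is undefined. I would handle it by noting that singular $\boldsymbol{\Omega}_i$ violates (\ref{eq:fronthaul-capacity-constraint}) whenever $\mathbf{W}_i\mathbf{S}_i\mathbf{W}_i^H$ has any component on the null space of $\boldsymbol{\Omega}_i$. Otherwise I would restrict attention to $\boldsymbol{\Omega}_i\succ\mathbf{0}$ by a limiting argument. Invertibility of $\mathbf{C}_{\mathrm{IF},k}$ holds whenever $\sigma_z^2>0$ and $\bar{\mathbf{W}}$ has full row rank, or $\bar{\boldsymbol{\Omega}}\succ\mathbf{0}$.
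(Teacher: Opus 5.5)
Your proposal is correct and follows the same route as the paper. The paper's proof simply invokes the FP transform of \cite{Shen:TN19} for the objective, which matches your Lagrangian-dual step followed by the quadratic transform (together these yield exactly the $\log(1+\gamma_k)-\gamma_k+(1+\gamma_k)(\cdot)$ structure), and the log-determinant Fenchel inequality of \cite{Zhou:TSP16} for the fronthaul constraint; your write-up fills in the details it leaves implicit (the Sherman--Morrison step, the maximizers, the lifting/projection argument, and the regularity remarks on singular $\boldsymbol{\Omega}_i$ and invertibility of $\mathbf{C}_{\mathrm{IF},k}$).
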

\begin{proof}
    The objective function in (\ref{eq:problem-FP-objective}) is derived by applying the quadratic transform used in the FP framework \cite[Thm. 1]{Shen:TN19} to the original objective function in (\ref{eq:problem-original-objective}). Similarly, the constraint in (\ref{eq:problem-FP-FH}) is obtained by applying Fenchel's inequality for $\log_2\det(\cdot)$ in \cite[Lem. 1]{Zhou:TSP16} to (\ref{eq:fronthaul-capacity-constraint}).
\end{proof}

Compared to the original problem (\ref{eq:problem-original}), the reformulated problem (\ref{eq:problem-FP}) introduces additional optimization variables $\{\boldsymbol{\gamma}, \boldsymbol{\theta}, \boldsymbol{\Sigma}\}$.
However, for these variables kept fixed, the resulting subproblem of optimizing only $\{\mathbf{W}, \boldsymbol{\Omega}\}$ becomes convex and can be efficiently solved using standard convex optimization solvers (e.g., CVX \cite{Grant:CVX20}).
Conversely, for fixed $\{\mathbf{W}, \boldsymbol{\Omega}\}$, the optimal $\{\boldsymbol{\gamma}, \boldsymbol{\theta}, \boldsymbol{\Sigma}\}$ can be found in closed form as (\ref{eq:optimal-gamma-theta-Sigma}).
This structure naturally motivates an AO approach that iteratively updates $\{\boldsymbol{\gamma}, \boldsymbol{\theta}, \boldsymbol{\Sigma}\}$ and $\{\mathbf{W}, \boldsymbol{\Omega}\}$.
It is noted that each iteration of the AO algorithm guarantees a non-decreasing sum-rate value, and hence the algorithm converges to a stationary point of the original problem (\ref{eq:problem-original}).








\subsection{Accelerated FP-Based Optimization (A-FP)} \label{sub:distributed}

It is challenging to implement the FP-based AO scheme developed in Sec. \ref{sub:FP-based} for the following reasons. First, it assumes a centralized scenario, where each AP reports its local CSI to the CPU. The CPU then jointly optimizes $\mathbf{W}$ and $\boldsymbol{\Omega}$ based on the global CSI and sends the optimized variables back to the APs. Consequently, the amount of information exchanged over the fronthaul links is proportional to the number of AP antennas $M$, which becomes prohibitively large in CF XL-MIMO systems with $M\gg 1$.
Second, the computational complexity at the CPU is substantial due to the convex optimization process at each iteration.
To overcome these limitations, we propose in this subsection an accelerated version of the FP-based scheme, termed A-FP, and discuss its complexity and decentralized implementation, in which the amount of data exchanged over the fronthaul links is independent of $M$, in the following subsections.


The computational bottleneck of the FP-based algorithm comes from the process of optimizing the primary variables $\{\mathbf{W}, \boldsymbol{\Omega}\}$ for fixed $\{\boldsymbol{\gamma},\boldsymbol{\theta},\boldsymbol{\Sigma}\}$, which relies on convex solvers.
To resolve this bottleneck, we consider per-AP individual optimization, i.e., optimizing each pair $\{\mathbf{W}_i, \boldsymbol{\Omega}_i\}$ of AP $i$ for fixed other APs' variables $\{\mathbf{W}_j, \boldsymbol{\Omega}_j\}_{j\in\mathcal{L}\setminus\{i\}}$. The goal of this individual optimization on a per-AP basis is to update each $\{\mathbf{W}_i, \boldsymbol{\Omega}_i\}$ with closed-form computations, accelerating the overall optimization process \cite{Kim:WCL25}.

Let us define $\mathbf{w}_i = \text{vec}(\mathbf{W}_i)\in\mathbb{C}^{MN}$.
By removing the terms or constraints that are not dependent on $\{\mathbf{w}_i, \boldsymbol{\Omega}_i\}$, the subproblem for $\{\mathbf{w}_i, \boldsymbol{\Omega}_i\}$ with fixed $\{\mathbf{w}_j, \boldsymbol{\Omega}_j\}_{j\in\mathcal{L}\setminus\{i\}}$ can be stated as
\begingroup
\allowdisplaybreaks
\begin{subequations} \label{eq:problem-perAP-quadratic}
\begin{align}
    \!\!\!\!\underset{ \mathbf{w}_i, \boldsymbol{\Omega}_i} {\mathrm{min.}}\,\,\, &     \mathbf{w}_i^H\mathbf{D}_i\mathbf{w}_i - \Re\left\{\tilde{\mathbf{g}}_i^H\mathbf{w}_i\right\} +\text{tr}\left(\mathbf{T}_i\boldsymbol{\Omega}_i\right)\label{eq:problem-perAP-quadratic-objective} \\
 \mathrm{s.t. }\,\,\, & \|\mathbf{A}_i^H\mathbf{w}_i\|^2 + \text{tr}\left(\boldsymbol{\Sigma}_i^{-1}\boldsymbol{\Omega}_i\right) - \ln \det\left(\boldsymbol{\Omega}_i\right) \leq \tilde{C}_{F,i},
\label{eq:problem-perAP-quadratic-FH} \\
& \boldsymbol{\Omega}_i \succeq \mathbf{0}, \label{eq:problem-perAP-quadratic-psd}
\end{align}
\end{subequations}
\endgroup
where $\mathbf{D}_i \in \mathbb{C}^{MN\times MN}$, $\tilde{\mathbf{g}}_i\in\mathbb{C}^{MN}$, $\mathbf{T}_i\in\mathbb{C}^{N\times N}$, $\mathbf{A}_i\in\mathbb{C}^{MN \times MN}$, and $\tilde{C}_{F,i}$ are defined as
\begingroup
\allowdisplaybreaks
\begin{subequations} \label{eq:def-Di-gtildei-Ti-Ai}
\begin{align}
    & \mathbf{D}_i = \sum\nolimits_{k\in\mathcal{K}} \left(1+\gamma_k\right) \Big( \sum\nolimits_{k^{\prime}\in\mathcal{K}}p_{k^{\prime}} \tilde{\mathbf{h}}_{i,k^{\prime},k}\tilde{\mathbf{h}}_{i,k^{\prime},k}^H \label{eq:def-Di} \\
    & \qquad  + \sigma_z^2\left( \mathbf{I}_M\otimes(\boldsymbol{\theta}_{i,k}\boldsymbol{\theta}_{i,k}^H) \right) \Big), \nonumber \\
    & \tilde{\mathbf{g}}_i = \sum\nolimits_{k\in\mathcal{K}} 2\left(1+\gamma_k\right) \tilde{\mathbf{g}}_{i,k}, \label{eq:def-gtildei} \\
    & \mathbf{T}_i = \sum\nolimits_{k\in\mathcal{K}} \left(1+\gamma_k\right) \boldsymbol{\theta}_{i,k} \boldsymbol{\theta}_{i,k}^H,\label{eq:def-Ti} \\
    & \mathbf{A}_i = (\mathbf{S}_i^{1/2})^* \otimes \boldsymbol{\Sigma}_i^{-1/2}, \label{eq:def-Ai} \\
    & \tilde{C}_{F,i} = C_F \ln 2 - \ln \det \left(\boldsymbol{\Sigma}_i\right) +N, \label{eq:def-CtildeFi}
\end{align}
\end{subequations}
\endgroup
with $\tilde{\mathbf{h}}_{i,k^{\prime},k} = \mathbf{h}_{i,k^{\prime}}^* \otimes \boldsymbol{\theta}_{i,k}$, $\tilde{\mathbf{g}}_{i,k} = \mathbf{g}_{i,k}^* \otimes \boldsymbol{\theta}_{i,k}$,  $\mathbf{g}_{i,k} = p_k^{1/2} \mathbf{h}_{i,k} - \sum_{k^{\prime}\in\mathcal{K}} p_{k^{\prime}} \alpha_{i,k,k^{\prime}}^* \mathbf{h}_{i,k^{\prime}}$, and $\alpha_{i,k,k^{\prime}} = \sum_{i^{\prime}\in\mathcal{L}\setminus\{i\}} \boldsymbol{\theta}_{i^{\prime},k}^H \mathbf{W}_{i^{\prime}} \mathbf{h}_{i^{\prime}, k^{\prime}}$.
Here we have defined $\boldsymbol{\theta}_{i,k}\in\mathbb{C}^{N}$ as the $i$th subvector of $\boldsymbol{\theta}_k$, i.e., $\boldsymbol{\theta}_k = [\boldsymbol{\theta}_{1,k}^H \cdots \boldsymbol{\theta}_{L,k}^H]^H$.

To tackle the problem (\ref{eq:problem-perAP-quadratic}), we consider its Lagrangian dual problem given by
\begin{align}
    \underset{ \mu_i \geq 0} {\mathrm{max}} \,\, \underset{ \mathbf{w}_i, \boldsymbol{\Omega}_i\succeq \mathbf{0}} {\mathrm{min}} \,\, \mathcal{L}\left(\mathbf{w}_i, \boldsymbol{\Omega}_i, \mu_i\right), \label{eq:dual-problem}
\end{align}
where the Lagrangian $\mathcal{L}(\mathbf{w}_i, \boldsymbol{\Omega}_i, \mu_i)$ of (\ref{eq:problem-perAP-quadratic}) is defined as
\begin{align}
    &\mathcal{L}\left(\mathbf{w}_i, \boldsymbol{\Omega}_i, \mu_i\right) = \mathbf{w}_i^H \mathbf{D}_i\mathbf{w}_i - \Re\{\tilde{\mathbf{g}}_i^H\mathbf{w}_i\} + \text{tr}\left(\mathbf{T}_i\boldsymbol{\Omega}_i\right) \label{eq:Lagrangian} \\
    & \,\,\,\,\,\,+ \mu_i\left( \|\mathbf{A}_i^H\mathbf{w}_i\|^2 + \text{tr}\left( \boldsymbol{\Sigma}_i^{-1} \boldsymbol{\Omega}_i \right)- \ln \det \left(\boldsymbol{\Omega}_i\right) - \tilde{C}_{F,i}\right), \nonumber
\end{align}
with the Lagrange multiplier $\mu_i \geq 0$.

We address the dual problem (\ref{eq:dual-problem}) using the primal-dual subgradient method \cite{Nedic:JOTA09}, where the primal variables $\{\mathbf{w}_i, \boldsymbol{\Omega}_i\}$ and the dual variable $\mu_i$ are alternately optimized.
The optimal $\{\mathbf{w}_i, \boldsymbol{\Omega}_i\}$ for fixed $\mu_i$ can be obtained by making the partial derivatives of the Lagrangian with respect to each variable to zero. The resulting values are given as
\begin{subequations} \label{eq:opt-wi-Omegai}
\begin{align}
    & \mathbf{w}_i = \frac{1}{2}\left( 
\mathbf{D}_i + \mu_i \mathbf{A}_i \mathbf{A}_i^H \right)^{-1}\tilde{\mathbf{g}}_i, \label{eq:opt-wi} \\
    & \boldsymbol{\Omega}_i = \mu_i \left( 
\mathbf{T}_i + \mu_i\boldsymbol{\Sigma}_i^{-1} \right)^{-1}. \label{eq:opt-Omegai}
\end{align}
\end{subequations}
For fixed $\{\mathbf{w}_i, \boldsymbol{\Omega}_i\}$, the update of $\mu_i$ is executed in such a way that it moves toward the subgradient direction with an adaptive step size, i.e.,
\begin{align}
    \mu_i &\leftarrow \Big[ \mu_i + \delta \Big( 
\|\mathbf{A}_i^H\mathbf{w}_i\|^2 + \text{tr}\left(\boldsymbol{\Sigma}_i^{-1}\boldsymbol{\Omega}_i\right) \label{eq:update-mui-subgradient} \\
&\,\,\, \,\,\,\,\,\,- \ln \det (\boldsymbol{\Omega}_i) - \tilde{C}_{F,i} \Big) \Big]^+, \nonumber
\end{align}
with a step size $\delta > 0$ and the notation $[\cdot]^+ = \max\{\cdot, 0\}$.

\begin{algorithm}
\caption{Accelerated FP (A-FP) algorithm}

\textbf{\footnotesize{}1}~\textbf{initialize:}

\textbf{\footnotesize{}2}~Set $\{\mathbf{W},\boldsymbol{\Omega}\}$ to an arbitrary point that satisfies (\ref{eq:problem-original-psd}), and calculate the sum-rate $R_{\text{sum}}^{\text{old}}$ with the initialized $\{\mathbf{W},\boldsymbol{\Omega}\}$.

\textbf{\footnotesize{}3}~\textbf{repeat}

\textbf{\footnotesize{}4}~~~~\,Update $\{\boldsymbol{\gamma}, \boldsymbol{\theta}, \boldsymbol{\Sigma}\}$ according to (\ref{eq:optimal-gamma-theta-Sigma}).

\textbf{\footnotesize{}5}~~~~~For $i\leftarrow 1$ to $L$

\textbf{\footnotesize{}6}~~~~~~~~~~Compute $\{\mathbf{D}_i, \tilde{\mathbf{g}}_i, \mathbf{T}_i, \mathbf{A}_i\}$ with (\ref{eq:def-Di})--(\ref{eq:def-Ai}).

\textbf{\footnotesize{}7}~~~~~~~~~~Set $\mathbf{W}_i^{\text{old}}\leftarrow \mathbf{W}_i$ and $\boldsymbol{\Omega}_i^{\text{old}}\leftarrow \boldsymbol{\Omega}_i$.

\textbf{\footnotesize{}8}~~~~~~~~~~Initialize $\mu_i\leftarrow 0$, $\delta \leftarrow \delta_0$.

\textbf{\footnotesize{}9}~~~~~~~~~~\textbf{repeat}

\textbf{\footnotesize{}10}~~~~~~~~~~~~~Update $\{\mathbf{w}_i, \boldsymbol{\Omega}_i\}$ according to (\ref{eq:opt-wi-Omegai}).

\textbf{\footnotesize{}11}~~~~~~~~~~~~~Update $\mu_i$ according to (\ref{eq:update-mui-subgradient}).

\textbf{\footnotesize{}12}~~~~~~~~~~~~~Update $\delta \leftarrow \delta r$.

\textbf{\footnotesize{}13}~~~~~~~~~\textbf{until} $\|\mathbf{W}_i - \mathbf{W}_i^{\text{old}}\|_F^2 + \|\boldsymbol{\Omega}_i - \boldsymbol{\Omega}_i^{\text{old}}\|_F^2 \leq \Delta_{\text{in}}$ 

~~~~~~~~~~~~~~~~~(Otherwise, set $\mathbf{W}_i^{\text{old}}\leftarrow \mathbf{W}_i$ and $\boldsymbol{\Omega}_i^{\text{old}}\leftarrow \boldsymbol{\Omega}_i$)

\textbf{\footnotesize{}14}~~~~~~~~~If (\ref{eq:fronthaul-capacity-constraint}) is violated, update $\mathbf{W}_i \leftarrow c_i\mathbf{W}_i$, where 

~~~~~~~~~~~~$c_i\in(0,1)$ is chosen such that (\ref{eq:fronthaul-capacity-constraint}) is satisfied with

~~~~~~~~~~~~equality.

\textbf{\footnotesize{}15}~~~~Calculate the sum-rate $R_{\text{sum}}^{\text{new}}$ with the updated $\{\mathbf{W}, \boldsymbol{\Omega}\}$.

\textbf{\footnotesize{}16}~\textbf{until} $|R_{\text{sum}}^{\text{new}} - R_{\text{sum}}^{\text{old}}|\leq \Delta_{\text{out}}$ (Otherwise, set $R_{\text{sum}}^{\text{old}} \leftarrow R_{\text{sum}}^{\text{new}}$)
\end{algorithm}

In Algorithm 1, we present the detailed procedure of the proposed A-FP scheme. The decaying factor $r \in (0,1)$ is predetermined constant, and
the scaler $c_i$ used for projection in Step 14 can be found by the bisection method, given that the left-hand side of (\ref{eq:fronthaul-capacity-constraint}) with $\mathbf{W}_i \leftarrow c_i\mathbf{W}_i$ monotonically increases with $c_i$.





We briefly discuss the asymptotic complexity of optimizing the primary variables $\{\mathbf{W}, \boldsymbol{\Omega}\}$ for fixed $\{\boldsymbol{\gamma},\boldsymbol{\theta},\boldsymbol{\Sigma}\}$, which is the main computational bottleneck of the proposed FP-based scheme. When the accelerated version of FP (A-FP) method in Algorithm 1 is applied, the complexity of updating $\{\mathbf{W}, \boldsymbol{\Omega}\}$, which comprises the computations in (\ref{eq:def-Di-gtildei-Ti-Ai}), (\ref{eq:opt-wi-Omegai}) and (\ref{eq:update-mui-subgradient}), is given by $\mathcal{O}(LM^3N^3+M^2N^2K^2)$.
This complexity is significantly lower than the complexity 
$\mathcal{O}(L^4M^4N^4 + L^3 M N^3 K)$ incurred when standard convex solvers such as CVX \cite{Grant:CVX20} are used to solve the convex problem (\ref{eq:problem-FP}) for fixed $\{\boldsymbol{\gamma},\boldsymbol{\theta},\boldsymbol{\Sigma}\}$ \cite{BTal:LN19}.


\subsection{Fronthaul-Efficient Decentralized Implementation} \label{sub:decentralized}

In this subsection, we present a fronthaul-efficient decentralized implementation of the A-FP scheme in Algorithm 1. The key idea is to enable the APs and the CPU to exchange information over fronthaul links whose required data volume does not scale with the number of antennas per AP, $M$. This design allows the decentralized implementation to remain practical for CF XL-MIMO systems with $M\gg 1$.
We describe the computation tasks carried out at each AP $i$ and at the CPU, along with the information exchange required for those tasks.
We recall from Sec. \ref{sub:Uplink-Channel-Model} that each AP $i$ has access to its local CSI $\{\mathbf{h}_{i,k}\}_{k\in\mathcal{K}}$.

\subsubsection{Computation at AP $i$ and Required Information}

At each iteration, AP $i$ updates $\{\mathbf{D}_i, \tilde{\mathbf{g}}_i, \mathbf{T}_i, \mathbf{A}_i\}$ using (\ref{eq:def-Di})--(\ref{eq:def-Ai}) and then optimizes its processing variables $\{\mathbf{w}_i, \boldsymbol{\Omega}_i\}$ via iteratively computing (\ref{eq:opt-wi-Omegai}) and (\ref{eq:update-mui-subgradient}).
The auxiliary variable $\boldsymbol{\Sigma}_i$ is also updated locally, since it only depends on $\{\mathbf{w}_i, \boldsymbol{\Omega}_i\}$ and the local CSI at AP $i$, as shown in (\ref{eq:optimal-Sigma}).

For these updates, AP $i$ requires from the CPU the auxiliary variables $\boldsymbol{\gamma}$ and $\{\boldsymbol{\theta}_{i,k}\}_{k\in\mathcal{K}}$.
Additionally, updating $\tilde{\mathbf{g}}_i$ requires the information about $\{ \boldsymbol{\theta}_{i^{\prime},k}^H \mathbf{W}_{i^{\prime}}, \mathbf{h}_{i^{\prime},k} \}_{k\in\mathcal{K}, i^{\prime}\in\mathcal{L}\setminus\{i\}}$.
Since these consist of only $K$ scalars, $K$ $N$-dimensional vectors, and $K(L-1)$ scalars, respectively, the fronthaul load does not scale with the number of AP antennas $M$, unlike the centralized design in Sec. \ref{sub:FP-based}.

\subsubsection{Computation at CPU and Required Information}

At each iteration, the CPU updates $\boldsymbol{\gamma}$ and $\boldsymbol{\theta}$ using (\ref{eq:optimal-gamma}) and (\ref{eq:optimal-theta}), respectively, and sends $\boldsymbol{\gamma}$ and $\{\boldsymbol{\theta}_{i,k}\}_{k\in\mathcal{K}}$ to each AP $i$.
To perform these updates, the CPU collects from each AP $i$ the variables $\{\mathbf{W}_i \mathbf{h}_{i,k}\}_{k\in\mathcal{K}}$, $\mathbf{W}_i\mathbf{W}_i^H$ and $\boldsymbol{\Omega}_i$.
This corresponds to $K$ $N$-dimensional vectors and 2 $N\times N$ matrices, which is again independent of the number of antennas $M$.
Additionally, each AP $i$ should send $\{\boldsymbol{\theta}_{i,k}^H\mathbf{W}_i\mathbf{h}_{i,k}\}_{k\in\mathcal{K}}$, a set of $K$ scalars, to the CPU, since these values are required by the other APs for their local updates.

\begin{table}
\caption{\small Fronthaul overheads of decentralized implementation}
\vspace{-2mm}
\centering
\renewcommand{\arraystretch}{1.5}
\begin{tabular}{|c|c||c|}
    \hline    
     \multicolumn{2}{|c||} {Direction} & Fronthaul overhead \\
     \hline \hline
     \multirow{2}{*}{AP $i$ $\rightarrow$ CPU} & Variables  & \makecell[c]{$\{ \mathbf{W}_i \mathbf{h}_{i,k} \}_{k\in\mathcal{K}},\mathbf{W}_i\mathbf{W}_i^H,$ \\
     $ \boldsymbol{\Omega}_i,\{\boldsymbol{\theta}_{i,k}^H\mathbf{W}_i\mathbf{h}_{i,k} \}_{k\in\mathcal{K}}$}\\
    \cline{2-3}
       &  No. of scalars &   $2N^2+K(N+1)$ (indep. of $M$) \\
    \hline \hline
     \multirow{2}{*}{CPU $\rightarrow$ AP $i$} & Variables  &  \makecell[c]{$ \{\gamma_k\}_{k \in \mathcal{K}},\{\boldsymbol{\theta}_{i,k}\}_{k \in \mathcal{K}},$ \\
     $\{\boldsymbol{\theta}_{i^{\prime},k}^H\mathbf{W}_{i^{\prime}}\mathbf{h}_{i^{\prime},k}  \}_{k\in\mathcal{K},i^{\prime} \in \mathcal{L}\setminus \{i\}}$}\\
    \cline{2-3}
      &  No. of scalars &   $K(N+L)$ (indep. of $M$) \\
    \hline
\end{tabular}
\end{table}

In Table I, we summarize the communication overhead between each AP $i$ and the CPU over the fronthaul links in both uplink and downlink directions for the decentralized implementation described in this subsection. In the global CSI-based scheme, each AP $i$ needs to report its local CSI $\{\mathbf{h}_{i,k}\}_{k\in\mathcal{K}}$ to the CPU, resulting in a data volume that scales with the number of AP antennas $M$.
In contrast, Table I shows that in the decentralized implementation, the amount of information exchanged over the fronthaul links is independent of $M$, achieving a fronthaul-efficient and scalable design well suited for CF XL-MIMO systems.

\section{Numerical Results} \label{sec:numerical}



We consider a circular service area of a radius 100 m, where $L$ APs and $K$ UEs are uniformly distributed.
To avoid excessively strong channels, each UE is placed at least 10 m away from its nearest AP.
We assume a carrier frequency of $f_c = 3.5$ GHz and a system bandwidth $20$ MHz, following the 5G specification.
The channel vector $\mathbf{h}_{i,k}$ is given by $\mathbf{h}_{i,k} = \beta_{i,k}^{-1/2} \tilde{\mathbf{h}}_{i,k}$ \cite{Sun:VTC16}, where $\tilde{\mathbf{h}}_{i,k}\sim\mathcal{CN}(\mathbf{0}, \mathbf{I})$ represents small-scale fading effect, and $\beta_{i,k}$ is the large-scale path-loss coefficient expressed as $\beta_{i,k} = 20 \log_{10} \left(4\pi f_c / c \right) + 31 \log_{10} d_{i,k} + \chi_{\sigma}^{\text{CI}}$.
Here, $c$ denotes the speed of light, $d_{i,k}$ is the distance between AP $i$ and UE $k$, and $\chi_{\sigma}^{\text{CI}} \sim \mathcal{N}(0, (8.1)^2)$ models the log-normal shadowing effect.
Each UE transmits with a fixed power of 23 dBm, and the power spectral density of background noise at the APs is set to -169 dBm/Hz.
The initial step size $\delta_0$ and its decaying factor $r$ for Algorithm 1 are set to $\delta_0=1$ and $r=0.95$, respectively, but the algorithm was numerically verified to remain stable for $r\in [0.8, 0.98]$.

We compare the sum-rates of the proposed FP-based schemes with the following local CSI-based schemes: \textit{i)} \textit{Local EVD} \cite{Liu:Cambridge17}: The row vectors of each transformation matrix $\mathbf{W}_i$ are set to the $N$ leading eigenvectors of $\mathbf{H}_i \bar{\mathbf{P}} \mathbf{H}_i^H + \sigma_z^2\mathbf{I}$; 
\textit{ii)} \textit{Local MF} \cite{Wiffen:WCNC20}: Each transformation matrix $\mathbf{W}_i$ is set to a partial matched filter (MF) $\mathbf{W}_i = [\mathbf{h}_{i,k^i_1} \, \cdots \, \mathbf{h}_{i,k^i_N}]^H$, where $\{k_1^i, \ldots k_N^i\}$ are the indices of the UEs corresponding to the $N$ largest channel gains $p_k\|\mathbf{h}_{i,k}\|^2$ arriving at AP $i$; \textit{iii)} \textit{Local Random}: The elements of $\mathbf{W}_i$ are obtained from independent and identically distributed complex Gaussian distribution.

In the above baseline schemes, for given $\mathbf{W}_i$, the quantization noise covariance matrix $\boldsymbol{\Omega}_i$ is set to $\boldsymbol{\Omega}_i = \text{diag}(\{ \omega_{i,n} \}_{n=1}^N)$ with $\omega_{i,n} = 1/(2^{C_F/N} - 1) \mathbf{w}_{i,n}^H \mathbf{S}_i \mathbf{w}_{i,n}$.
Here $\mathbf{w}_{i,n}^H$ denotes the $n$th row vector of $\mathbf{W}_i$.
It is noted that in these schemes, the transformation matrix $\mathbf{W}_i$ and the quantization noise covariance matrix $\boldsymbol{\Omega}_i$ at each AP $i$ are designed using only the local CSI $\{\mathbf{h}_{i,k}\}_{k\in\mathcal{K}}$.
For the proposed FP-based algorithms, the initial points of $\{\mathbf{W}, \boldsymbol{\Omega}\}$ are chosen according to the local EVD scheme, unless stated otherwise.


\begin{figure}
\centering
\subfloat[vs. the number of iterations]{\includegraphics[width=0.5\linewidth, height=0.55\linewidth]{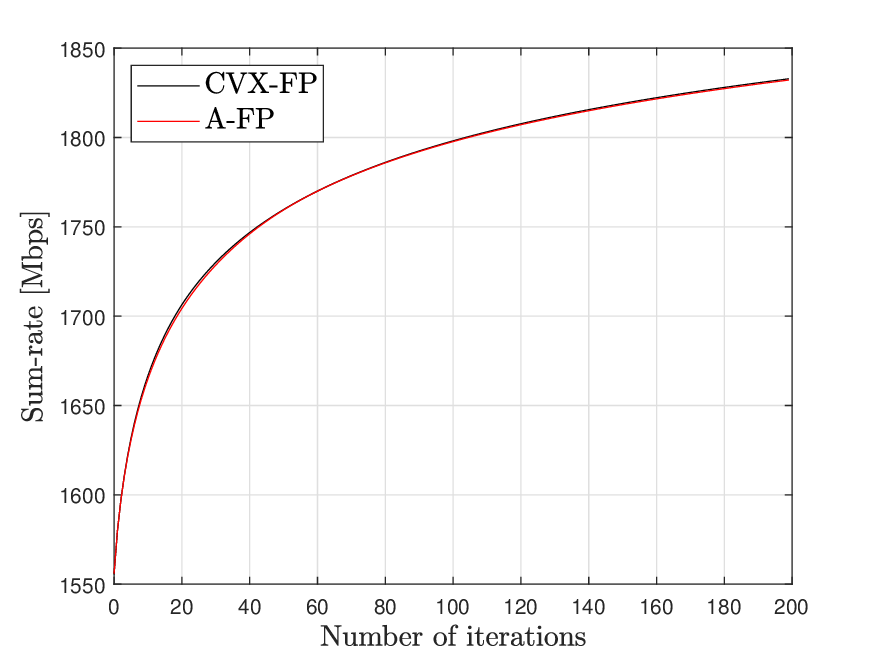}}\hfil
\subfloat[vs. the elapsed time]{\includegraphics[width=0.5\linewidth, height=0.55\linewidth]{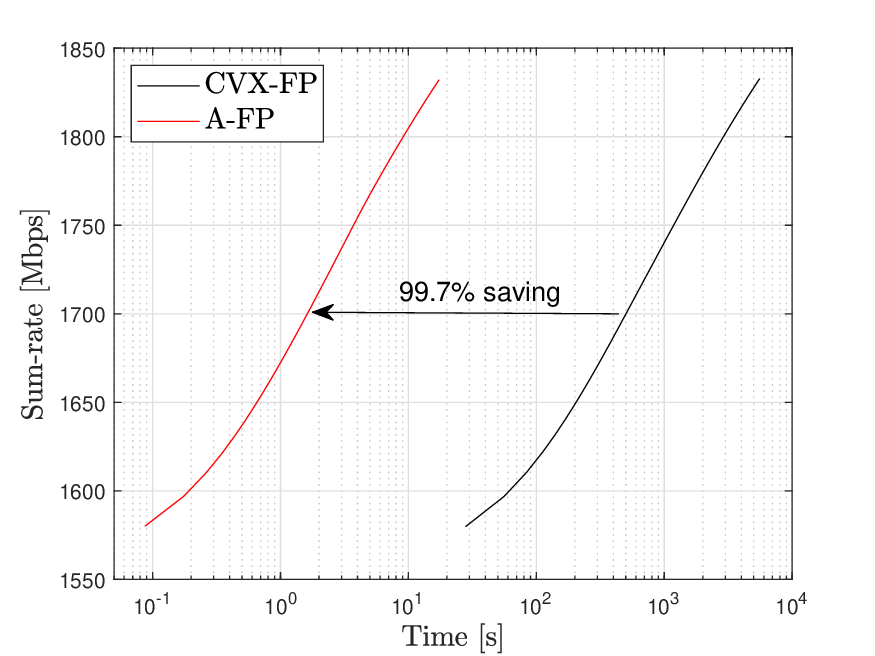}}\hfil
\caption{\small The convergence behaviors of the FP and A-FP schemes}
\label{fig:convergence}
\end{figure}

In Fig. \ref{fig:convergence}(a), we observe the convergence behaviors of the FP schemes by plotting their average sum-rates versus the number of iterations for a CF XL-MIMO system with $K=8$, $L=4$, $M=16$, and $N=2$. The CVX-FP scheme employs the CVX software \cite{Grant:CVX20} to update $\{\mathbf{W}, \boldsymbol{\Omega}\}$ at each iteration, whereas the A-FP scheme uses the per-AP closed-form updates described in Algorithm 1.
The A-FP scheme exhibits nearly the same convergence speed as the CVX-FP scheme in terms of iteration count.
Since the A-FP scheme entails a much lower per-iteration complexity than the CVX-FP scheme, as discussed in Sec. \ref{sub:distributed}, Fig. \ref{fig:convergence}(b) compares their average sum-rates with respect to the elapsed time. It is observed that the A-FP scheme can reduce the computational time by over 99$\%$ while achieving the same performance as the CVX-FP scheme. For the remaining numerical results, we focus on CF XL-MIMO systems with $M\geq 100$ and exclude the CVX-FP scheme, whose time complexity is prohibitively large while yielding almost identical performance to the A-FP scheme.

\begin{figure}
\centering\includegraphics[width=0.9\linewidth]{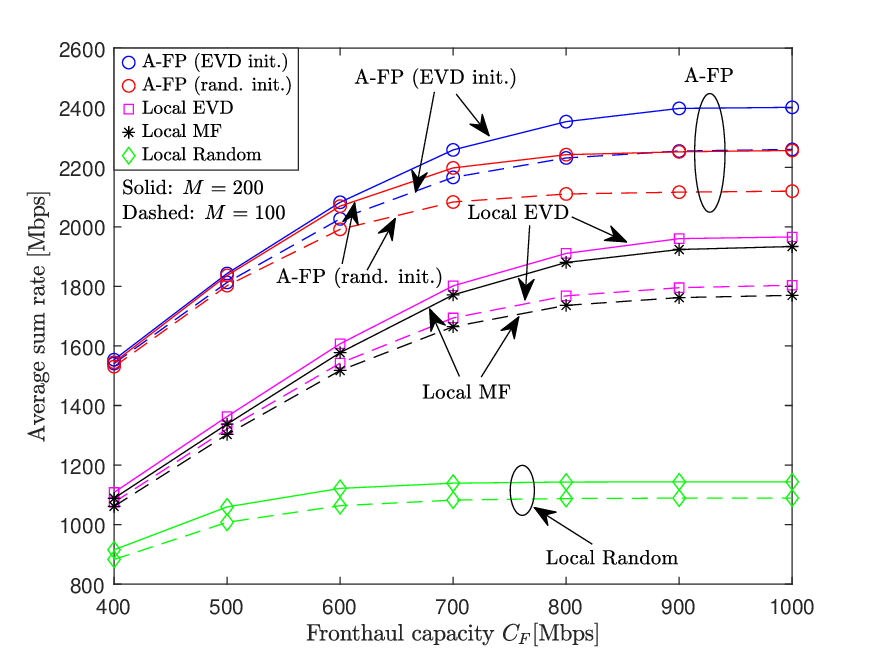}
\caption{\small Average sum-rate versus the fronthaul capacity}\label{fig:FigureB_SR_vs_CF}
\end{figure}

In Fig. \ref{fig:FigureB_SR_vs_CF}, we present the average sum-rates of various schemes as the fronthaul capacity $C_F$ increases for a CF XL-MIMO system with $K=8$, $L=4$, $M\in\{100,200\}$, and $N=2$. When $C_F$ is sufficiently large, the sum-rates of all schemes saturate at finite values due to the fixed SNR level.
Among the local CSI-based schemes, the EVD scheme, where each AP $i$ fully exploits the local covariance matrix $\mathbf{H}_i\bar{\mathbf{P}}\mathbf{H}_i^H$, achieves the highest performance.
Interestingly, the MF scheme exhibits only a slight sum-rate degradation compared to the EVD scheme, owing to the near orthogonality of the column vectors of $\mathbf{H}_i$ when $M \geq 100$.
Across all simulated cases, the proposed A-FP scheme consistently achieves substantial sum-rate gains over the local CSI-based schemes.
Specifically, the A-FP scheme achieves better performance when initialized with the local EVD solution than with a random initialization.





\section{Conclusion} \label{sec:conclusion}

We have proposed an efficient FP-based algorithm for joint optimization of linear transformation and fronthaul compression in uplink CF XL-MIMO systems. The algorithm enables closed-form per-AP updates of the optimization variables, greatly improving computational efficiency. A distributed implementation was also discussed, where the amount of information exchange between APs and CPU is independent of the number of AP antennas. Numerical results confirm that the proposed scheme significantly outperforms local CSI-based benchmarks.
As future work, we will investigate algorithms with further reduced computational complexity and designs relying solely on statistical local CSI.
Studying how to incorporate inter-AP and AP-UE synchronization effects and designing robust algorithms against imperfect synchronization are also relevant future directions \cite{Larsson:TSP24}.

\end{document}